\documentclass[smallextended,final]{svjour3}
\usepackage{mathptmx}

\usepackage[utf8]{inputenc}
\usepackage{bbm}
\usepackage{graphicx}
\usepackage{natbib}
\usepackage{url}

\usepackage[english]{babel}
\usepackage{boxedminipage}
\usepackage{stmaryrd}

\usepackage{amsmath}
\usepackage{amsfonts}
\usepackage{amssymb}

\newcommand{\SeenInts}{\mathcal{S}}
\newcommand{\Lattice}[1]{\mathcal{S}_{#1}}
\newcommand{\Compatible}[1]{\mathcal{C} \hspace*{-3pt} \left( #1 \right) }
\newcommand{\floor}[1]{\lfloor #1 \rfloor}
\newcommand{\interval}[1]{\llbracket #1 \rrbracket}
\newcommand{\maxS}{\hat{x}}
\newcommand{\maxSi}{\hat{n}}


\newcommand{\bbN}{\ensuremath{\mathbb{N} }}

\newcommand{\bbP}{\ensuremath{\mathbb{P} }}
\newcommand{\bbE}{\ensuremath{\mathbb{E} }}


\newcommand{\card}{\mathrm{card}}

\begin{document}

\title{Semiparametric Estimation of a Noise Model with Quantization Errors}
\author{Li-Thiao-Té Sébastien}
\institute{CMLA ENS Cachan CNRS UniverSud,  61 Avenue President Wilson, F-94230 Cachan \email{lithiaote@cmla.ens-cachan.fr}}

\maketitle

\begin{abstract}
The detectors in mass spectrometers are precise enough to count ion events, but in practice large quantization errors affect the observations. To study the statistics of low intensity chemical noise, we model the detector signal as $X = \floor{\tau \, N}$ and estimate both $\tau$ and $N$ in a semi-parametric approach where the integer valued random variable $N$ represents the number of ions and $\tau$ represents the gain parameter of the detector. When $\tau \leq 1$, we explain why the gain parameter cannot be recovered without a priori information on $N$. When $\tau > 1$ however, $N$ can be deduced from $X$ and a sufficiently precise estimate of $\tau$. To perform parametric estimation of $\tau$, we first study simple estimators which provide useful upper bounds. We then introduce the concept of \emph{compatible lattices} and we derive an optimal estimator that is independent of the law of $N$. 
\end{abstract}

\keywords{parametric estimation \and quantization effects \and life sciences signal processing}
\subclass{MSC 62F10 Point estimation (Parametric inference) \and 65G30 Interval and finite arithmetic \and 65G50 Roundoff error}

\section{Introduction}

\subsection{Ion Detectors}

\paragraph{}
Mass spectrometers are instruments that ionize the compounds of a sample, separate the ions, then quantify the ions at each mass to charge ratio. The resulting signal is a histogram that represents ion intensity as a function of the mass to charge ratio of the ions. With sufficient precision in the separation and the mass to charge measurement, the components of the sample can be identified and quantified. Mass spectrometers are widely used for analysing very diverse mixtures, e.g.\ detecting explosives for airport security or analysing oil products. See \cite{aebersold2003msb, lane2005msb, siuzdak1996msb} for an introduction to mass spectrometry in the life sciences.

\paragraph{}
We consider detectors similar to microchannel plate detectors that are used in most mass spectrometers \cite{wiza1979mpd}. When an ion hits the detector plate, it produces an analog signal that is amplified, quantized, then reported to the computer. The level of quantization is quite high as there may be only $2^{11} = 2048$ levels\footnote{Single precision floating point numbers have $2^{24} \sim 16.10^6$ levels of precision plus sign and exponent.} in some instruments, and small signals as well as chemical noise are strongly affected by quantization effects.



\paragraph{}
Specific difficulties have appeared with high-throughput analyses of biological material. In particular, biological samples may contain trace amounts of molecules of interest. These are difficult to distinguish from chemical noise which produces patterns similar to real signals \cite{andreev2003uda,krutchinsky2002ncn,windig1996nab}. In 2004, \cite{anderle2004qrd} suggested Poisson-like behaviour for the ion intensity based on a linear relationship between the mean and variance of the noise. This linear relationship suggests that the amplification factor of the detector may be unaccounted for in the data set. 



\paragraph{}
To study chemical noise in the experimental data, we interpret the amplification factor as an overdispersion parameter in a semi-parametric approach. 
To study chemical noise in the experimental data, we estimate the amplification factor and an unknown distribution for the chemical noise in a semi-parametric approach. Let $N$ denote the number of chemical noise ions that reach the detector, we consider the following observation model: 
\[ X = \floor{ \tau N + \varepsilon} \]
where the noisy signal $\tau N + \varepsilon$ is truncated before observation. $\tau$ represents the amplification factor of the detector and $\varepsilon$ represents electronic noise. In this paper, we make the assumption that $\varepsilon = 0$ or equivalently that there are only quantization errors in the measurements. The observation model is associated with the statistical structure $(\bbN, \mathfrak B, \bbP_{\theta})$, with $\theta = (\tau,N)$ where $\tau$ is a positive real number and $N$ is a probability distribution on $\bbN$. 

\paragraph{}
We believe a priori that $N$ is Poisson distributed as it models rare events (ion counts). Consequently, $\tau$ can be interpreted as an overdispersion parameter affecting Poisson distributed observations. This has been tackled in the framework of double exponential families, as presented by Efron in \cite{efron1986def}. In \cite{antoniadis1999esp}, Antoniadis et al use double exponential families in a regression model to analyse diffraction spectra. This corresponds to estimating the regression function $\mu_i$ in the model $X'_i = \tau N_i$ where $N_i$ is Poisson distributed with varying parameter $\mu_i$. However, this framework does not explicitly take into account quantization errors and thus provides poor parameter estimates as we will show in Section \ref{doublepoisson}. Moreover, we wish to confirm the Poisson hypothesis using non parametric estimation.

\paragraph{}
Our approach is to first estimate $\tau$ given a set of observations of $X$, then deduce the distribution of $N$ from the estimate. We show that the estimate is precise enough to allow complete disambiguation of the observations.


\subsection{Estimation of the Ion Statistics}

\paragraph{}
With negligible quantization error, the observation model becomes $X = \tau N$. Estimation of $\tau$ is trivial; all that is required is to observe the event $\{N = 1\}$ i.e.\ $\{X = \tau \times 1\}$, or the two events $\{x_1 = \tau i\}$ and $\{x_2 = \tau (i+1)\}$ and compute the difference $x_2 - x_1$. To recover $N$, it then suffices to consider $X / \tau$. The quantization error may be neglected when $\tau \gg 1$ in the observation model 
$X = \floor{\tau \, N}$ and the previous estimates provide $\tau$ with a precision on the order of the quantization error.

\paragraph{}
In the general case, we can recover the samples of $N$ from the samples of $X$ when the mapping  $x \mapsto \floor{ \tau \, x }$ is injective. The inverse mapping is $ y \mapsto \lceil y / \tau \rceil$. We call this situation the \emph{distinguishible} case. It occurs if and only if $\tau \geq 1$ (see proof in the Appendix, Prop \ref{injective}). In this situation, the semi-parametric approach can be separated into parametric estimation of the gain parameter $\tau$ then non parametric estimation of the distribution of $N$ from iid samples.

\begin{example}
\label{example2}  \mbox{} \vspace{-8pt} 
\begin{verbatim}
> data = floor(1.32 * n)  
% Distinguishible case
> n
 [1]  1  2  3  4  5  6  7  8  9 10
> data
 [1]  1  2  3  5  6  7  9 10 11 13
\end{verbatim}
\end{example}

\paragraph{}
When $\tau$ is smaller than $1$, the truncation error merges adjacent values of $N$. In the following example, the events $\{N = 3\}$ and $\{N = 4\}$ cannot be distinguished in the data set. This is because the corresponding observation is $\{X = 2\}$ in both cases.
\begin{example}
\label{example1}  \mbox{} \vspace{-8pt} 
\begin{verbatim}
> data = floor(0.68 * n) 
% Non distinguishible case
> n
 [1]  1  2  3  4  5  6  7  8  9 10
> data
 [1]  0  1  2  2  3  4  4  5  6  6
\end{verbatim}
\end{example}

%

\paragraph{}
In the distinguishible case, it is natural to sort and index the observed values in order to determine the mapping $x \mapsto \lfloor \tau \, x \rfloor$. This is not sufficient in practice because of missing values or outliers which can modify the indexes.

\subsection{Observation Set}

\paragraph{}
The gain parameter and the law of $N$ have separate effects on $X$. In the distinguishible case, the distribution function of $X$ is a transformation of the distribution function of $N$ by the mapping $x \mapsto \lfloor \tau \, x \rfloor$. The gain parameter and the truncation error only distort the position of each peak, whereas the relative frequencies are unchanged. Consequently, the support $\SeenInts$ of the empirical distribution is sufficient information for estimating $\tau$ whereas the empirical frequencies are sufficient information for the distribution of $N$.

\paragraph{}
In the non distinguishible case, the set of observed integers is always $\bbN$ for large samples (see Section \ref{compatible_values}). As a consequence, $\tau$ cannot be estimated based on that set alone. A semi-parametric approach is not feasible either. For instance, we cannot estimate the mean $\bbE[N]$ but only $\bbE[X] = \tau \bbE[N]$. To separate $\tau$ and $N$, we have to provide prior assumptions on the distribution of $N$ like a Poisson parametric family.

\paragraph{}
In the following, we study properties of the set $\SeenInts$ of observed integers. This set can be constructed from the dataset in $\mathcal{O}(n \log(n))$  time using sorting for example. The algorithmic complexity of the following algorithms is governed by the size of $\SeenInts$, and in particular, the maximum integer in $\SeenInts$. 

\paragraph{}
We focus on the distinguishible case, and perform parametric estimation of the gain parameter from a random set of integers. As the support of the empirical distribution is a sufficient statistic for $\tau$, we use the statistical structure $\left( \Omega = 2^\bbN, \mathfrak T, \bbP_\tau, \tau \in \left] 1,+\infty \right[ \, \right)$ where $\Omega$ is the power set of $\bbN$ and $\mathfrak T$ is the exhaustive $\sigma$-algebra on $\Omega$. \\
$\left( \bbP_\tau, \tau \in \left] 1,+\infty \right[ \, \right)$ is a parametric family of distributions on $\Omega$ that is implicitly generated in the following way.
For a fixed integer $n$ and fixed but unknown integer-valued random variable $N$, $\bbP_\tau$ is the distribution of the random variable $\SeenInts$ which is the set of observed integers in an independent identically distributed sample $(X_1, \ldots, X_n)$ of $X = \floor{\tau N}$.


\subsection{Organization of the paper}


\paragraph{}
To estimate $\tau$ in the distinguishible case, we first provide simple estimators for $\tau$ in Section \ref{bornes_obs}. These are later used as a starting point for improved estimators and to restrict the search space for $\tau$.

\paragraph{}
In Section \ref{compatible_values}, we define the notion of \emph{compatible values} and provide a few properties of the set of compatible values. In particular, the true parameter $\tau$ is a compatible value and is close to the highest compatible value. This leads to an optimal estimator that is described in \ref{implementation}. 

\paragraph{}
We show the results of some simulations in section \ref{simulations} and compare with the Maximum Likelihood Estimator obtained from the Double Poisson Family, an estimator based on linear regression and another one based on Fourier transform.

%
%
%

\section{Estimators and Upper Bounds for $\tau$}
\label{bornes_obs}


%

%

%
\paragraph{}
The results in this section are based on the following idea. Two points in $\SeenInts$ are separated by at least $\floor{\tau}$. Consequently, when $\tau$ is large, then $\SeenInts$ is a sparse set, whereas $\SeenInts$ is dense when $\tau$ is near 1. For instance, there are consecutive points in $\SeenInts$ if and only if $\tau \leq 2$ (see Proposition \ref{any_two_obs} in the Appendix).

\paragraph{}
A better estimate can be obtained by combining more than 2 consecutive points. Let $\interval{x,y}$ denote the set of integers between $x$ and $y$. If $\interval{x,y}$ is a subset of $\SeenInts$, then $\displaystyle \tau < 1 + \frac{1}{y-x}$.
Consequently, $\tau$ can be estimated by $\displaystyle 1+\frac{1}{y-x}$ with a precision on the order of the inverse of the length of the interval $\displaystyle \frac{1}{y-x}$. However, this estimator is strongly affected by missing values in $\SeenInts$.

\paragraph{}
Instead of considering all the segments in $\SeenInts$, we propose to use the overall density of the set, which is easier to compute algorithmically.
Let $\maxS = \lfloor \tau \; \maxSi \rfloor$ denote the largest integer in $\SeenInts$. Then $\displaystyle \tau < \frac{\maxS + 1}{\maxSi}$. When $\maxSi$ is unknown (because of potential missing values), let $n$ denote the number of non zero observed integers i.e.\ the number of elements in $\SeenInts$. Then 
\[
\label{upperBound}
 \tau < \frac{\maxS + 1}{\maxSi} \leq \frac{\maxS + 1}{n}.
\]

\paragraph{}
Consequently, $\displaystyle \frac{\maxS + 1}{n}$ is an estimate of $\tau$ with precision on the order of $1/n$ (without missing values). As it uses the whole data, it is usually more precise than the previous bound. We will use this in the rest of the paper to restrict the search space for $\tau$.

\paragraph{}
Let us compare the previous bounds on an example. Suppose that $\tau = 1.32$ and $\SeenInts = \{1,2,3,5,6,7,9,10,11,13 \}$. \\
As there are consecutive integers in $\SeenInts$ we obtain $\tau < 2$ . \\
Using the interval $\interval{5,7}$, we obtain $\tau < 1+1/2$. \\
Using the interval $\interval{9,11}$, we obtain $\tau < 1+1/2$ as well. \\
The density upper bound is $\tau < 14 / 10$.

\paragraph{Remark}
We only provided upper bounds in this section because lower bounds can only be deduced from the integers that cannot be generated in the model. These are difficult to distinguish from missing values, which are integers that can be generated in the model, but do not appear in the set $\SeenInts$ of observed integers. 

\section{Compatible Values}
\label{compatible_values}

\paragraph{}
The upper bounds that we proposed in the previous section are easy to compute but rather poor because they only take into account the proportion of observed integers. In this section, we describe an algorithm with higher computational load but which can leverage the information in the location of each observed integer in the data set.

\subsection{Lattices of Integers}


%
%

\paragraph{}
In the observation model $X = \floor{\tau N}$ where $N$ is integer valued, only specific integers can be generated. Given a strictly positive real number $t$, let us define the set of possible values for $x$ as the \emph{lattice associated to $t$}, i.e.\ the infinite set of integers $\Lattice{t} = \{ x = \lfloor t \, k \rfloor, k \in \bbN \}$. The set of observed integers $\SeenInts$ is also called the \emph{empirical lattice}.

\paragraph{}
With infinitely many observations, the parameter $\tau$ is completely characterized by the empirical lattice as the following proposition shows. This justifies that $\SeenInts$ is sufficient information for estimating $\tau$.

\begin{proposition}[Equivalence between lattices and numbers]
\label{equiv_grid_nb}
In the distinguishible case, let $t_1$ and $t_2$ denote two real numbers such that $t_1 \geq 1$ and $t_2 \geq 1$. Then $\Lattice{t_1} = \Lattice{t_2}$ if and only if $t_1 = t_2$.
\end{proposition}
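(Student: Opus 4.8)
The plan is to prove the contrapositive of the nontrivial direction: assuming $t_1 \neq t_2$, I will exhibit an integer that lies in one lattice but not the other. Without loss of generality take $t_1 < t_2$. The key quantity is the \emph{asymptotic density} of the lattice $\Lattice{t}$: since consecutive elements $\floor{tk}$ and $\floor{t(k+1)}$ differ by either $\floor{t}$ or $\floor{t}+1$, the number of elements of $\Lattice{t}$ below $M$ is asymptotically $M/t$. Hence if $\Lattice{t_1} = \Lattice{t_2}$ as sets, counting elements below $M$ and letting $M \to \infty$ forces $1/t_1 = 1/t_2$, i.e.\ $t_1 = t_2$. This already gives the result, and it is short; but I would also present the sharper, non-asymptotic argument below because it makes the ``close to the highest compatible value'' philosophy of the paper explicit.

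The sharper approach: for each $k$, the element of $\Lattice{t}$ indexed by $k$ is $\floor{tk}$, and because $t \geq 1$ the map $k \mapsto \floor{tk}$ is strictly increasing (distinguishible case, Prop.~\ref{injective}), so the $k$-th smallest element of $\Lattice{t}$ is exactly $\floor{tk}$. Therefore, if $\Lattice{t_1} = \Lattice{t_2}$, then matching up the $k$-th smallest elements gives $\floor{t_1 k} = \floor{t_2 k}$ for every $k \in \bbN$. This means $|t_1 k - t_2 k| < 1$ for all $k$, i.e.\ $|t_1 - t_2| < 1/k$ for all $k \geq 1$, which forces $t_1 = t_2$. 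The converse direction ($t_1 = t_2 \implies \Lattice{t_1} = \Lattice{t_2}$) is immediate from the definition of $\Lattice{t}$.

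The one point that needs a small amount of care is the claim that the $k$-th smallest element of $\Lattice{t}$ is $\floor{tk}$: this requires knowing both that $k \mapsto \floor{tk}$ is injective (hence strictly increasing, as it is nondecreasing) and that it is the enumeration in increasing order with no gaps in the indexing — which is exactly strict monotonicity in $k$. That is precisely what Proposition~\ref{injective} supplies in the distinguishible case $t \geq 1$, so there is no real obstacle, only the need to invoke it. I would phrase the final write-up around the sharper argument, since the bound $|t_1 - t_2| < 1/k$ is in the same spirit as the $1/n$-type precision estimates established in Section~\ref{bornes_obs}.
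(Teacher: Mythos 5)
Your proof is correct. The paper argues differently in its mechanics: it picks a single index $n$ with $\lfloor t_1 n \rfloor < \lfloor t_2 n \rfloor$, then splits into two cases — either $\lfloor t_2 n \rfloor$ is missing from $\Lattice{t_1}$ outright, or it appears there with a larger index $n_1 > n$, in which case counting elements of the two lattices inside the initial segment $\interval{0, \lfloor t_2 n \rfloor}$ gives a cardinality mismatch. Your rank-matching argument uses the same underlying fact (for $t \geq 1$ the map $k \mapsto \floor{tk}$ is strictly increasing, so the index of an element equals its rank in the lattice), but deploys it more directly: equality of the sets forces $\floor{t_1 k} = \floor{t_2 k}$ for \emph{every} $k$, hence $|t_1 - t_2| < 1/k$ for all $k$, which is a cleaner conclusion and even a quantitative one — it says that two distinct gain parameters must produce lattices that disagree by index $k \approx 1/|t_1 - t_2|$, which dovetails with the paper's later precision estimates of order $1/\maxSi$. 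Your first (asymptotic density) sketch is also valid but weaker; the rank-matching version is the one worth keeping. The only step requiring care — that the $k$-th smallest element of $\Lattice{t}$ is exactly $\floor{tk}$ — you correctly reduce to strict monotonicity, i.e.\ to Proposition~\ref{injective}.
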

\begin{proof}
Obviously, if $t_1 = t_2$ then $\Lattice{t_1} = \Lattice{t_2}$. Let us prove the converse, i.e.\  $\Lattice{t_1} = \Lattice{t_2}$ implies $t_1 = t_2$ or equivalently if $t_1 \neq t_2$ then $\Lattice{t_1} \neq \Lattice{t_2}$. Suppose that $t_1 < t_2$. There exists $n \in \bbN$ such that $\lfloor t_1 n \rfloor < \lfloor t_2 n \rfloor$. Either $\lfloor t_2 n \rfloor \notin \Lattice{t_1}$, in which case $\Lattice{t_1} \neq \Lattice{t_2}$, or $\lfloor t_2 n \rfloor = \lfloor t_1 n_1 \rfloor$ with $n_1 > n$. In the latter case, distinguishibility implies that there are strictly more elements in $\Lattice{t_1} \cap A$ than in $\Lattice{t_2} \cap A$ where $A$ denotes the set of integers $\interval{ 0, \lfloor t_2 n \rfloor }$.
\end{proof}

%
\subsection{The Set of Compatible Values}

\paragraph{}
Proposition \ref{equiv_grid_nb} is not sufficient for estimating $\tau$ because in practice we only observe a finite set $\SeenInts \varsubsetneq \Lattice{\tau}$. Consequently we define the notion of compatible lattices and equivalently compatible values. For any positive real $t$, we say that $t$ is \emph{compatible} with the data if $\SeenInts \subset \Lattice{t}$. Likewise, for any two sets $A$ and $B$, $A$ is compatible with $B$ if $B \subset A$. Being compatible with the data set is a necessary condition for a valid estimator of $\tau$. 


\paragraph{}
The set of values that are compatible with the infinite lattice $\Lattice{\tau}$ is adequate for estimating $\tau$ because of the following proposition.

\begin{proposition}
\label{proposition1}
$\tau$ is the largest real number in $\Compatible{\Lattice{\tau}}$.
\end{proposition}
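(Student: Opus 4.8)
The plan is to establish two facts: that $\tau$ itself is compatible with $\Lattice{\tau}$, and that nothing strictly larger than $\tau$ can be. The first is immediate from the definitions, since $\Lattice{\tau} \subset \Lattice{\tau}$ gives $\tau \in \Compatible{\Lattice{\tau}}$. The real work is the second fact, which combined with the first shows that $\tau$ is the maximum (not merely the supremum) of $\Compatible{\Lattice{\tau}}$.

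To prove the second fact I would argue by contradiction: assume some real $t > \tau$ satisfies $\Lattice{\tau} \subset \Lattice{t}$. Since $t > \tau \geq 1$, both maps $k \mapsto \floor{\tau k}$ and $j \mapsto \floor{t\, j}$ are injective on $\bbN$ by Proposition \ref{injective}. For each $k \in \bbN$ the point $\floor{\tau k}$ lies in $\Lattice{t}$, so there is a unique $j(k) \in \bbN$ with $\floor{t\, j(k)} = \floor{\tau k}$, and $k \mapsto j(k)$ inherits injectivity from $k \mapsto \floor{\tau k}$. The crucial estimate is an elementary count: from $\floor{t\, j(k)} = \floor{\tau k} \leq \tau k$ together with $\floor{t\, j(k)} > t\, j(k) - 1$ one gets $j(k) < (\tau k + 1)/t$. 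Fixing an integer $K$, the $K+1$ distinct nonnegative integers $j(0), \dots, j(K)$ then all lie below $(\tau K + 1)/t$, so their maximum, which is at least $K$, forces $K < (\tau K + 1)/t$, i.e.\ $K < 1/(t-\tau)$. Choosing $K \geq 1/(t-\tau)$ gives a contradiction, so every compatible value is $\leq \tau$.

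Conceptually this is just the observation that $\Lattice{s}$ has asymptotic density $1/s$ among the integers, so $\Lattice{\tau} \subset \Lattice{t}$ forces $1/\tau \leq 1/t$ and hence $t \leq \tau$; the argument above is a finite version that dispenses with the limiting step. Alternatively one could phrase it through $\card(\Lattice{\tau} \cap \interval{0,M}) \leq \card(\Lattice{t} \cap \interval{0,M})$ and let $M \to \infty$, using that in the distinguishible case $\card(\Lattice{s} \cap \interval{0,M})$ equals $1 + \max\{k : \floor{sk} \leq M\} = M/s + O(1)$. The step I expect to need the most care is not any single inequality but the bookkeeping of the injectivity hypotheses --- checking that $t > \tau \geq 1$ really does make $j \mapsto \floor{t\, j}$ injective, so that $j(k)$ is well defined and $k \mapsto j(k)$ is genuinely one-to-one; the rest is routine floor arithmetic.
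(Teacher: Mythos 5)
Your proof is correct and follows essentially the same route as the paper's: both argue that if some $t > \tau$ were compatible with $\Lattice{\tau}$, then the index of $\floor{\tau K}$ inside $\Lattice{t}$ would have to be at least $K$, which combined with the floor inequality $t\,j(K) - 1 < \floor{\tau K} \leq \tau K$ squeezes $t$ to within $1/K$ of $\tau$ for every $K$, a contradiction. Your explicit injection $k \mapsto j(k)$ together with the pigeonhole bound $\max_{k \leq K} j(k) \geq K$ in fact makes rigorous the step the paper only sketches (the claim that $a' \geq a$ ``because $a$ and $a'$ correspond to their indices''), so your version is, if anything, the cleaner write-up of the same argument.
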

\begin{proof}
$\tau$ is a compatible value, we only have to show that it is the largest.\\
Let $u$ denote a real number greater than $\tau$, and let $\alpha$ denote a positive real number such that $\tau < \tau + \alpha < u$. We will prove that $u$ is not compatible with $\tau$ by constructing an element in $\Lattice{\tau}$ that cannot be in $\Lattice{u}$. \\
Let $a$ denote a positive integer such that $a> \frac{1}{\alpha}$, and $n = \lfloor \tau a \rfloor$. \\
Suppose that $\Lattice{\tau} \subset \Lattice{u}$ then $n$ belongs to $\Lattice{u}$, and there exists a positive integer $a'$ such that $n = \lfloor u a' \rfloor$. \\
$a' \geq a$ because in the distinguishible case, $a$ and $a'$ correspond to their indices in the sets $\Lattice{\tau}$ and $\Lattice{u}$ and $\Lattice{\tau} \subset \Lattice{u}$. \\
Moreover, as $\tau \leq u$ we have $\lfloor \tau a \rfloor \leq \lfloor u a \rfloor \leq \lfloor u a' \rfloor$. For all three terms to be equal to $n$ in the distinguishible case requires that $a = a'$. \\
Consequently, both $\tau$ and $u$ lie in the interval $[\frac{n}{a},\frac{n+1}{a}[$. As a result, $|\tau-u| \leq \frac{1}{a}$ which contradicts $a > \frac{1}{\alpha}$.
\end{proof}

\paragraph{}
The set $\Compatible{\Lattice{\tau}}$ has an intricate structure. It contains the positive real numbers smaller than 1 and the harmonics $ \left\{ \frac{\tau}{k}, k \in \bbN^* \right\}$, but these are not the only values. For example, $4/3$ is compatible with 2 because every even integer can be written as $\floor{k \times 4/3}$, $k \in \bbN$. Indeed, let $k$ be an even integer. Either $k$ is a multiple of $4$, in which case $k = 4i = \floor{\frac{4}{3} \times 3 i }$, or $k = 4i + 2 = \floor{ \frac{4}{3} \times (3i+2)}$.

\subsection{Estimation with a Finite Lattice}
\label{implementation}

\paragraph{}
In practice, the empirical lattice is finite and can contain missing values and outliers. We say that an integer is \emph{missing} from $\SeenInts$ when it is in the theoretical lattice $\Lattice{\tau}$, smaller than $\maxS = \max \SeenInts$, but not in $\SeenInts$. The set of compatible values with $\SeenInts$ is a finite union of intervals and compatible values are never isolated. As the data contains less information, the true parameter $\tau$ is not the supremum of $\Compatible{\SeenInts}$, but it is still maximal in the following sense.

\begin{proposition}
\label{prop_algo}
The set of compatible values $\Compatible{\SeenInts}$ contains exactly $]0,1]$ and intervals of length at least $1 / \maxS ^2$ where $\maxS = \max \SeenInts$. In particular, if there are no outliers or missing values in $\SeenInts$ then $\tau$ belongs to the interval $[a,b[$ such that $b = \sup \Compatible{\SeenInts}$.
\end{proposition}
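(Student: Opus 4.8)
The plan is to make compatibility explicit as a pointwise condition on $t$, read off the interval structure directly, and then re-run the argument of Proposition~\ref{proposition1} in a ``finite'' form that exploits the no-outliers and no-missing-values hypotheses. First note that for an integer $s\geq 1$ one has $s\in\Lattice{t}$ if and only if $t\in[s/k,(s+1)/k[$ for some integer $k\geq 1$, and for $t>1$ this index is unique by injectivity of $k\mapsto\floor{tk}$ (Prop.~\ref{injective}); write $k_s(t)$ for it. So $t\in\Compatible{\SeenInts}$ exactly when $t$ lies in the intersection, over the nonzero $s\in\SeenInts$, of the sets $\bigcup_{k\geq 1}[s/k,(s+1)/k[$. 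The part below $1$ is immediate: for $t\leq 1$ consecutive values $\floor{tk}$ differ by at most $1$, hence $\Lattice{t}=\bbN\supseteq\SeenInts$, so $]0,1]\subseteq\Compatible{\SeenInts}$.

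For the length bound I would fix a compatible $t>1$, abbreviate $k_s=k_s(t)$, and observe that $t>1$ forces $\floor{tk_s}\geq k_s$, so $1\leq k_s\leq s\leq\maxS$. The set $I:=\bigcap_{s\in\SeenInts\setminus\{0\}}[s/k_s,(s+1)/k_s[$ equals $[L,R[$ with $L=\max_s s/k_s$ and $R=\min_s (s+1)/k_s$; it contains $t$, and any $t'\in I$ satisfies $\floor{t'k_s}=s$ for every $s$, so $I\subseteq\Compatible{\SeenInts}$. Writing $L=s^*/k^*$ and $R=(s^{**}+1)/k^{**}$, we have $R-L=\bigl(k^*(s^{**}+1)-k^{**}s^*\bigr)/(k^*k^{**})$ with a \emph{positive integer} numerator (positive since $L\leq t<R$) over a denominator $\leq\maxS^2$, so $|I|\geq 1/\maxS^2$. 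Since $k_s\leq s$ gives $L\geq 1$, these intervals never descend below $1$; as every point of $\Compatible{\SeenInts}\cap\,]1,\infty[$ lies in such an $I$, the set $\Compatible{\SeenInts}$ is exactly $]0,1]$ together with a union of intervals of length $\geq 1/\maxS^2$, and passing to connected components establishes the first assertion.

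For the ``in particular'' statement, no outliers gives $\SeenInts\subseteq\Lattice{\tau}$, so $\tau\in\Compatible{\SeenInts}$ and $\tau\leq b:=\sup\Compatible{\SeenInts}$, and no missing values upgrades this to $\SeenInts=\Lattice{\tau}\cap\interval{0,\maxS}$. The crux is the claim that whenever $u\geq\tau$ is compatible, the whole segment $[\tau,u]$ is compatible. To prove it, fix $s\in\SeenInts\setminus\{0\}$ and let $k,k'$ be the indices of $s$ in $\Lattice{\tau}$ and $\Lattice{u}$; the identity $\SeenInts=\Lattice{\tau}\cap\interval{0,\maxS}$ yields $\Lattice{\tau}\cap\interval{0,s}\subseteq\SeenInts\subseteq\Lattice{u}$, hence $k\leq k'$; then $s=\floor{\tau k}\leq\floor{uk}\leq\floor{uk'}=s$ and injectivity for $u>1$ force $k=k'$, so $\tau$ and $u$, and therefore all of $[\tau,u]$, lie in $[s/k,(s+1)/k[$, i.e.\ $\floor{vk}=s$ for every $v\in[\tau,u]$; running over all $s$ gives $[\tau,u]\subseteq\Compatible{\SeenInts}$. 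Letting $u$ increase to $b$ through compatible values then yields $[\tau,b[\,\subseteq\Compatible{\SeenInts}$, while $b\notin\Compatible{\SeenInts}$ (a compatible $b$ is necessarily $>1$ and, by the previous step, would lie in an interval extending past $\sup\Compatible{\SeenInts}$). Since the left endpoints $\max_s s/k_s$ take only finitely many values, the connected component of $\Compatible{\SeenInts}$ containing $\tau$ is a left-closed interval reaching up to $b$, i.e.\ of the form $[a,b[$ as claimed.

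The hard part is this last step: one cannot simply invoke Proposition~\ref{proposition1}, because compatibility with the \emph{finite} set $\SeenInts$ does not give the inclusion $\Lattice{\tau}\subseteq\Lattice{u}$ on which the index comparison in that proof relied. The no-missing-values hypothesis is precisely what restores the weaker but sufficient inclusion $\Lattice{\tau}\cap\interval{0,s}\subseteq\Lattice{u}$, and converting that into $k\leq k'$ and then $k=k'$ is the technical heart of the argument. A minor additional nuisance is checking that the top component is left-closed (of type $[a,b[$, not $]a,b[$), which is exactly where the finiteness of the set of possible left endpoints is used.
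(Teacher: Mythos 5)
Your proof is correct and follows essentially the same route as the paper: compatibility of a value $t>1$ is decomposed into the constraints $t\in[s/k_s,(s+1)/k_s[$ for $s\in\SeenInts$, whose intersection is a half-open interval of length at least $1/\maxS^2$, and the no-missing-values hypothesis is then used to compare indices in $\Lattice{\tau}$ and $\Lattice{u}$ and conclude that no compatible value lies above the right endpoint of $\tau$'s interval. Your counting argument (using $\Lattice{\tau}\cap\interval{0,s}\subseteq\Lattice{u}$ to force $k=k'$) simply makes explicit the ``$x$ is skipped in $\Lattice{t}$'' step that the paper's proof asserts without detail.
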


\paragraph{}
The proof is based on the following two lemmas.

\begin{lemma}
\label{proposition_length}
The set of compatible values contains exactly $]0,1]$ and a finite number of intervals of the form $[a,b[$ of length at least $1 / \maxS ^2$ where $\maxS = \max \SeenInts$.
\end{lemma}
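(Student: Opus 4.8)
The plan is to describe $\Compatible{\SeenInts}$ explicitly as a finite intersection of elementary sets — one per observed integer — and then read off the announced structure. For a positive integer $x$ set $G_x = \{ t > 0 : x \in \Lattice{t} \}$. Since $x = \floor{t k}$ is equivalent to $t \in [x/k, (x+1)/k)$, and since a value $t \geq 1$ can only satisfy this with $k \leq x$, one obtains $G_x \cap [1,\infty) = \bigcup_{k=1}^{x} [x/k, (x+1)/k)$, a union of half-open intervals contained in $[1, x+1)$; moreover $]0,1] \subseteq G_x$ because $\Lattice{t} = \bbN$ whenever $t \leq 1$ (the map $k \mapsto \floor{t k}$ increments by $0$ or $1$ and is unbounded). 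By definition $\Compatible{\SeenInts} = \bigcap_{x \in \SeenInts,\, x \geq 1} G_x$, so $]0,1] \subseteq \Compatible{\SeenInts}$, and taking $x = \maxS$ shows that every compatible value exceeding $1$ lies in $[1, \maxS+1)$.

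The first technical point I would establish is that, for a fixed $x$, the intervals $[x/k,(x+1)/k)$ with $1 \leq k \leq x$ are not merely disjoint but separated by a gap: a one-line computation gives $(x+1)/k' < x/k$ whenever $k < k' \leq x$. Hence $G_x \cap (1,\infty)$ is a disjoint union of at most $x$ half-open intervals, and intersecting finitely many such sets shows that $\Compatible{\SeenInts} \cap (1,\infty)$ is a finite union of intervals, each of the form $[a,b)$, all contained in $(1, \maxS+1)$.

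The core of the argument is the length estimate for a maximal component $J$ of $\Compatible{\SeenInts} \cap (1,\infty)$. Since $J$ is connected, is contained in $G_x$, and the pieces of $G_x$ are separated, $J$ must lie entirely inside a single piece $[x/k_x, (x+1)/k_x)$; thus every observed $x$ acquires a well-defined index $k_x \leq x \leq \maxS$ that is constant over $J$. Conversely, any $t \in \bigcap_x [x/k_x, (x+1)/k_x)$ reproduces every observed $x$ (with index $k_x$) and is therefore compatible, and this intersection is an interval containing $J$; maximality then forces
\[
J = \bigcap_{x \in \SeenInts,\, x \geq 1} \left[ \frac{x}{k_x},\, \frac{x+1}{k_x} \right) = \left[ \max_x \frac{x}{k_x},\ \min_x \frac{x+1}{k_x} \right).
\]
Writing $a = x_1/k_{x_1}$ and $b = (x_0+1)/k_{x_0}$ for the maximizing and minimizing indices and putting $b-a$ over the common denominator $k_{x_0} k_{x_1} \leq \maxS^2$, the numerator $(x_0+1)k_{x_1} - x_1 k_{x_0}$ is a positive integer, so $b - a \geq 1/\maxS^2$.

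Finiteness is then immediate: the maximal components are pairwise disjoint, each has length at least $1/\maxS^2$, and they all fit inside the bounded interval $(1,\maxS+1)$, so there are at most $\maxS^3$ of them. The delicate step is the observation that along a connected piece of $\Compatible{\SeenInts}$ the representation index $k_x$ of every observed value is frozen — once this is in place the endpoints become fractions with denominator at most $\maxS$ and the bound drops out; the only supporting computation of substance is the separation estimate $(x+1)/k' < x/k$. The remaining care is a minor bookkeeping point: the component sitting just above $1$ is $]0,c)$ with $c>1$, and exhibiting $(1,1+1/\maxS)\subseteq\Compatible{\SeenInts}$ shows $c-1\geq 1/\maxS\geq 1/\maxS^2$, so it glues onto $]0,1]$ without affecting the count.
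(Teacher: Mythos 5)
Your proof is correct and follows essentially the same route as the paper's: each compatible $t>1$ pins down an index $n_x$ for every observed $x$, the component containing $t$ is the intersection $\bigcap_x \left[\frac{x}{n_x},\frac{x+1}{n_x}\right[$, and its length is a positive fraction whose denominator is at most $\maxS^2$. You make explicit some details the paper leaves implicit (the gaps separating the pieces of each $G_x$, hence the ``frozen index'' on a component, the finiteness count, and the component adjoining $]0,1]$), but the key computation is identical.
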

\begin{proof}
Let $t>1$ denote a compatible value. For each observed value $x \in \SeenInts$, there exists an integer $n$ such than $x = \floor{t n}$. Consequently, $t$ verifies $t \in [\frac{x}{n},\frac{x+1}{n}[$. The intersection of the constraints $t \in [\frac{x}{n},\frac{x+1}{n}[ $ for all $x \in \SeenInts$ is an interval $t \in [\frac{x_1}{n_1}, \frac{x_2}{n_2}[$. All values $t \in [\frac{x_1}{n_1}, \frac{x_2}{n_2}[$ verify all of the constraints and are thus compatible. The length of this interval is $\frac{x_2}{n_2} - \frac{x_1}{n_1}$ which is at least $1 / (n_1 n_2)$. In the distinguishible case, $n_1 < \max \SeenInts$ and $n_2 < \max \SeenInts$, which implies that the length is at least $1/(\max \SeenInts)^2$.
\end{proof}

\begin{lemma}
Let $t$ be a positive real number that is compatible with the empirical lattice. Then $\displaystyle t < \frac{\maxS + 1}{n}$.
\end{lemma}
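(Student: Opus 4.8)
The plan is to reprove the density upper bound of Section~\ref{bornes_obs}, noting that the argument there used the hypothesis $\SeenInts \subset \Lattice{\tau}$ and nothing else, while compatibility of $t$ with the empirical lattice is by definition exactly the statement $\SeenInts \subset \Lattice{t}$. So the same reasoning goes through verbatim with $t$ in place of $\tau$.

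First I would dispose of the case $t \le 1$. Assuming $\SeenInts$ contains a nonzero integer (so that $n \ge 1$ and the right-hand side is defined), the $n$ nonzero elements of $\SeenInts$ are distinct positive integers, all at most $\maxS$, whence $n \le \maxS$ and therefore $\frac{\maxS+1}{n} \ge \frac{\maxS+1}{\maxS} > 1 \ge t$; nothing more is needed.

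Now suppose $t > 1$. Since we are in the distinguishible case, the map $k \mapsto \floor{t k}$ on $\bbN$ is strictly increasing and injective, so every element of $\Lattice{t}$ has a well-defined index. Write $\maxS = \floor{t m}$, where $m$ is the index of $\maxS$ in $\Lattice{t}$. Because $\SeenInts \subset \Lattice{t}$ and every element of $\SeenInts$ is $\le \maxS = \floor{t m}$, monotonicity forces each of the $n$ nonzero elements of $\SeenInts$ to equal $\floor{t k}$ for some index $k \in \{1, \dots, m\}$ (the index $0$ produces $0$), and injectivity makes these $n$ indices distinct, so $n \le m$. Finally $t m < \floor{t m} + 1 = \maxS + 1$ yields $t < \frac{\maxS+1}{m} \le \frac{\maxS+1}{n}$, which is the claim.

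I do not anticipate a genuine obstacle. The only point deserving a little care is that the comparison between $n$ and the index $m$ is carried out entirely inside the single lattice $\Lattice{t}$, so that injectivity is directly available, rather than across two distinct lattices as in the proof of Proposition~\ref{proposition1}.
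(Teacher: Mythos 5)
Your proposal is correct and is essentially the paper's own argument: the paper proves this lemma by observing that the density upper bound of Proposition~\ref{proposition_UpperBound} uses only the inclusion $\SeenInts \subset \Lattice{\tau}$, so it applies verbatim with $t$ in place of $\tau$ once compatibility is unwound as $\SeenInts \subset \Lattice{t}$. You merely spell out the details the paper leaves implicit (the index count $n \le m$ and the easy case $t \le 1$), both of which are handled correctly.
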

\begin{proof}
This follows directly from the upper bounds in Section \ref{upperBound}. See Proposition \ref{proposition_UpperBound} in the Appendix.
\end{proof}


\paragraph{}
To complete the proof, it suffices to show that $\tau$ belongs to the largest interval.

\begin{proof}
There are only finitely many intervals of length at least $1/\maxS^2$ in $[0,\frac{\maxS + 1}{n_{max}}]$, so there exists such an interval $[a,b[$. \\
Let $\mathcal{N}$ denote the set $\mathcal{N} = \{ n | \lfloor n \tau \rfloor \in \SeenInts \}$, i.e.\ the set of values for $N$ that generate $\SeenInts$.
$\tau$ belongs to a certain interval $[a',b'[$ which is the intersection of the constraints $\tau \in [\frac{x}{n},\frac{x+1}{n}[$, for all $x = \floor{ n \tau }$ in $\SeenInts$. We show that $b' = b$, i.e.\ no positive real is both greater than $b'$ and compatible. Let $t$ such that $b' < t$. For all $n \in \mathcal{N}$, $\floor{ n \tau } \leq \floor{ n t }$. As $t \notin [a',b'[$, $t$ breaks at least one of the constraints, that is to say, there is an integer $x$ in $\SeenInts$ such that $x = \lfloor n \tau \rfloor < \lfloor n t \rfloor$. $x$ is skipped in $\Lattice{t}$ and thus $t$ is not compatible.
\end{proof}

\paragraph{}
The previous proposition suggests that it suffices to find the largest compatible interval to estimate $\tau$, and this is our proposed estimator $\tilde{\tau}$. More precisely, the set $\Compatible{\SeenInts} = \cup_{j=1}^J [a_j,b_j[$ is a union of $J$ intervals, with $(a_j)$ and $(b_j)$ increasing sequences, then 
\[
 \tilde{\tau} = \frac{a_J+b_J}{2}.
\]

\paragraph{}
We use the following algorithm to compute $\tilde{\tau}$. This also computes the mapping $x = \lfloor \tau n \rfloor \mapsto n$ and the precision.
\begin{itemize}
 \item compute the set of observed values by sorting the data set and removing multiple occurences
 \item compute the upper bound $\displaystyle \tau < B = \frac{\maxS + 1}{n}$ where $n = \card \SeenInts$
 \item find an approximation of the largest compatible value $t$ by testing the compatibility of the real numbers $\displaystyle t_k = B - \frac{k}{\maxS^2}$
 \item deduce the indexes from $t$, that is to say for all $x \in \SeenInts$, find $i$ such that $x = \lfloor t i \rfloor$
 \item compute the interval $[a,b[$ as the intersection of the constraints $\displaystyle t \in \left[\frac{x}{i},\frac{x+1}{i}\right[$, for all $x$ in $\SeenInts$
 \item return $\displaystyle \frac{a+b}{2}$ as an estimator for $\tau$
\end{itemize}



\subsection{Properties of the Estimator}

\paragraph{}
According to the previous results, the estimator performs well when there are no missing values or outliers. Its precision is $(b-a)/2$ and can be computed inside the algorithm. The precision is at least $1/\maxSi$, but depending on the value of $\tau$ it can reach a precision on the order of $1/\maxSi^2$ . In all cases, the precision is better than the density bound, and there is a lower bound.

\paragraph{}
If there are missing values or outliers, the algorithm may find an interval of compatible values that does not contain $\tau$. For example, if the dataset is $\{0,2,4,6,8\}$, a reasonable estimator would answer 2 and not $\tau = 4/3$ with missing values 1 and 5. In practice, such cases are rare, and are related to arithmetic properties of the set $\SeenInts$. However, the largest compatible value is never an erroneous answer to the problem. It is a parcimonious answer in the sense that it is the smallest lattice which may explain the dataset.

%
\paragraph{}
The estimator is optimal in the sense that the algorithm finds an interval of positive real numbers that are all plausible. Given a dataset $(x_1 = \floor{\tau \: i_1}, \ldots, x_n = \floor{\tau \: i_n})$ of size $n$, there is an interval of compatible values that can generate $(x_1,\ldots, x_n)$ from the same realization $(i_1,\ldots,i_n)$ of $N$. Let $[a_J,b_J[$ with $b_J = \sup \Compatible{\SeenInts}$, the following proposition holds.

\begin{proposition}
Given a realization $(i_1,\ldots,i_n)$ of $N$, all values in $[a_J,b_J[$ generate the same data set $(x_1,\ldots,x_n)$, i.e.\
\[
	\forall t \in [a_J,b_J[, \; \forall j \in \interval{1,n}, \; x_j = \floor{\tau \; i_j} = \floor{t \; i_j}
\]
\end{proposition}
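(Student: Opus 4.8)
The plan is to reduce the claim to a one-line inequality for the floor function and then to identify the interval involved with the one already analysed for Proposition~\ref{prop_algo}. First I would record the elementary equivalence: for a real number $t>0$ and an integer $m\ge 1$, one has $\floor{t\,m}=x$ if and only if $\frac{x}{m}\le t<\frac{x+1}{m}$. Applying this with $m=i_j$ and $x=x_j=\floor{\tau\,i_j}$ for each $j\in\interval{1,n}$ (the degenerate case $i_j=0$, which forces $x_j=0$ and $\floor{t\cdot 0}=0$, being trivial), the set of $t$ satisfying $\floor{t\,i_j}=x_j$ simultaneously for all $j$ is exactly
\[
 I \;=\; \bigcap_{j=1}^{n}\Big[\tfrac{x_j}{i_j},\,\tfrac{x_j+1}{i_j}\Big[ \;=\; \Big[\,\max_{j}\tfrac{x_j}{i_j},\ \min_{j}\tfrac{x_j+1}{i_j}\,\Big[,
\]
a half-open interval that is non-empty since $\tau$ lies in it.

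It then remains to see that $I$ is the interval $[a_J,b_J[$ of the statement, which I read --- consistently with the paragraph preceding the proposition --- as the interval of compatible values generating the data from the fixed realization $(i_1,\dots,i_n)$. Every $t\in I$ produces $\SeenInts$ from those indices, so $\SeenInts\subset\Lattice{t}$ and hence $I\subset\Compatible{\SeenInts}$; being an interval, $I$ sits inside a single connected component of $\Compatible{\SeenInts}$. The right endpoint $\min_j\frac{x_j+1}{i_j}$ of $I$ equals $\sup\Compatible{\SeenInts}=b_J$: this is precisely the ``$b'=b$'' step in the proof of Proposition~\ref{prop_algo}, where any $t$ above that value is shown to skip some observed integer and thus be incompatible. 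Once $I=[a_J,b_J[$ is granted, the conclusion $\forall t\in[a_J,b_J[,\ \forall j\in\interval{1,n},\ \floor{t\,i_j}=x_j=\floor{\tau\,i_j}$ is read straight off the description of $I$ from the first paragraph.

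The delicate point --- and the only place real work is needed --- is the left endpoint: one must check that the component of $\Compatible{\SeenInts}$ ending at $b_J$ does not extend strictly to the left of $I$, equivalently that every observed value keeps the same index $i_j$ in $\Lattice{t}$ throughout $[a_J,b_J[$, so that $a_J=\max_j x_j/i_j$. When $\SeenInts$ has no missing values this is automatic, because the number of elements of $\Lattice{t}$ below a given observed integer is then forced; with missing values one has to exclude the index of some $x_j$ jumping upward as $t$ decreases, and for that I would reuse the monotonicity/injectivity argument of Proposition~\ref{equiv_grid_nb} and the ``skipped integer'' argument inside Proposition~\ref{prop_algo}: a shift of one index forces, via $k\mapsto\floor{t k}$ increasing and the distinguishibility hypothesis, a shift of all smaller indices, and then one of the constraints $t\in[\frac{x}{i},\frac{x+1}{i}[$ is violated. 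Everything else is the floor inequality above.
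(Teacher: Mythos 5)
Your proof is correct and takes essentially the same route as the paper's, which is a one-line appeal to the argument of Proposition~\ref{prop_algo}: the interval $[a_J,b_J[$ is the intersection $\bigcap_j[\frac{x_j}{i_j},\frac{x_j+1}{i_j}[$ of the floor constraints, from which the claim is immediate. The one place you go beyond the paper --- asking whether the connected component of $\Compatible{\SeenInts}$ ending at $b_J$ could extend strictly to the left of that intersection --- is a genuine gap in the paper's terse justification, and it does close: two constraint intervals built from distinct index assignments are disjoint (a common point $t>1$ would give $x=\floor{t\,n}=\floor{t\,n'}$ with $n\neq n'$, contradicting injectivity of $k\mapsto\floor{t\,k}$), and a short computation shows they can only abut at $t=1$, so for $t>1$ each component is a single constraint interval and $a_J=\max_j x_j/i_j$ as you need.
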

\begin{proof}
As in the proof of Proposition \ref{prop_algo}, $[a_J,b_J[$ is the intersection of the constraints $x_j = \floor{t \; i_j}$.
\end{proof}

\paragraph{}
The data set does not contain enough information to distinguish the values in $[a_J,b_J[$. In particular, even if the realization $(i_1, \ldots, i_n)$ is given, then the values are not distinguishible. Note that if $x_0$ is known not to be in $\Lattice{\tau}$, then for all integers $i$, $\tau \geq \frac{x_0+1}{i}$ or $\tau < \frac{x_0}{i}$. These inequalities are not informative because they are already contained in $x = \floor{t i}, \forall x \in \SeenInts$.

\paragraph{}
The program is quite fast. First because is relies only on the set $\SeenInts$ which is much smaller than the dataset when $\tau$ is near 1 and $N$ is independent identically distributed, because repeats of $N$ are discarded. As the following proposition shows, with few missing values, the density bound is precise and the algorithm is quicker. All compatible values can be retrieved by testing $B \maxS ^ 2$ numbers.
\begin{proposition}
If there are no missing values, the largest compatible value is found after at most $\displaystyle \frac{\maxS^2}{\maxSi} \simeq \tau \maxS$ steps. With a small number of missing values $k \ll \maxSi$, the number of steps is on the order of $\displaystyle \tau^2 \maxS \left(k + \frac{1}{\tau}\right)$ where $k = \maxSi - \card \SeenInts$ is the number of missing values.
\end{proposition}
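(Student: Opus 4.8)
The plan is to track the arithmetic progression $t_k = B - k/\maxS^2$ of real numbers tested by the algorithm, where $B = (\maxS+1)/n$ and $n = \card \SeenInts$, and to count how many of them are examined before one lands in the largest compatible interval $[a_J,b_J[$, with $b_J = \sup \Compatible{\SeenInts}$.

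First I would show that the search stops precisely when $t_k$ first drops below $b_J$. Since $b_J = \sup\Compatible{\SeenInts}$ and the compatible intervals above $1$ are half-open, no tested $t_k \ge b_J$ is compatible; conversely, $[a_J,b_J[$ has length at least $1/\maxS^2$ by Lemma~\ref{proposition_length}, which is exactly the step size, so the largest point of the progression strictly below $b_J$ lies within $1/\maxS^2$ of $b_J$ and is therefore $\ge a_J$, hence in $[a_J,b_J[$ and compatible. Thus the number of steps equals $\min\{k : t_k < b_J\} = \lceil (B-b_J)\maxS^2 \rceil$ up to an additive constant, so it is of order $(B-b_J)\maxS^2$.

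It remains to bound $B - b_J$. Since $\tau$ is compatible with $\SeenInts$ we have $\tau \le b_J$, and since $\maxS = \floor{\tau\maxSi}$ we have $\maxS/\maxSi \le \tau$; hence $b_J \ge \maxS/\maxSi$ and $B - b_J \le B - \maxS/\maxSi$. If there are no missing values then $n = \maxSi$, so $B - \maxS/\maxSi = (\maxS+1)/\maxSi - \maxS/\maxSi = 1/\maxSi$, giving at most $\maxS^2/\maxSi$ steps; and since $\maxS/\maxSi = \floor{\tau\maxSi}/\maxSi \in (\tau - 1/\maxSi,\ \tau]$, this is $\simeq \tau\maxS$. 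If there are $k$ missing values then $n = \maxSi - k$ and
\[
 B - \frac{\maxS}{\maxSi} \;=\; \frac{\maxS+1}{\maxSi-k} - \frac{\maxS}{\maxSi} \;=\; \frac{\maxSi + k\,\maxS}{\maxSi\,(\maxSi-k)},
\]
so the number of steps is at most $\maxS^2\,\dfrac{\maxSi + k\,\maxS}{\maxSi\,(\maxSi-k)}$. Replacing $\maxSi - k$ by $\maxSi$ (valid since $k \ll \maxSi$) and $\maxS/\maxSi$ by $\tau$, this becomes $\simeq \dfrac{\maxS^2}{\maxSi} + k\,\maxS\Big(\dfrac{\maxS}{\maxSi}\Big)^{2} \simeq \tau\maxS + k\,\tau^2\maxS = \tau^2\maxS\big(k + \tfrac1\tau\big)$, as claimed.

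The one delicate step is the first: checking that the uniform grid of spacing $1/\maxS^2$ cannot jump over the largest compatible interval, so that the first compatible number the algorithm returns really lies in $[a_J,b_J[$ --- this is exactly the role of the length estimate $\ge 1/\maxS^2$ in Lemma~\ref{proposition_length}. Everything after is the elementary computation above, keeping in mind that the terms hidden in ``$\simeq$'' (an additive constant in the step count, and $\maxS/\maxSi$ versus $\tau$) are negligible for large $\maxSi$.
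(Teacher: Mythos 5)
Your proposal is correct and follows essentially the same route as the paper: count the grid points of spacing $1/\maxS^2$ between the starting bound $B=(\maxS+1)/n$ and the termination point, lower-bound the latter by $\maxS/\maxSi \leq \tau$, and expand $\maxS^2\bigl(B-\maxS/\maxSi\bigr)$ under the approximations $k\ll\maxSi$, $1\ll\maxS$, $\tau\simeq\maxS/\maxSi$ to get $\tau^2\maxS\bigl(k+\tfrac1\tau\bigr)$. Your opening step --- verifying via the length bound of Lemma~\ref{proposition_length} that a grid of spacing $1/\maxS^2$ cannot jump over the interval $[a_J,b_J[$, so the first compatible grid point really lies in it --- is a correct and worthwhile justification of a point the paper's proof leaves implicit.
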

\begin{proof}
The procedure begins at $B = \frac{\maxS + 1}{n}$, ends before $\frac{\maxS}{\maxSi}$ because $\tau \geq a \geq \frac{\maxS}{\maxSi}$, and proceeds in steps of length $1/\maxS^2$. Consequently, there are at most $C = \maxS^2 ( \frac{\maxS + 1}{n}- \frac{\maxS}{\maxSi})$ steps. Let $k = \maxSi - \card \SeenInts$ denote the number of missing values. We make the following three approximations: $k \ll \maxSi$, $1 \ll \maxS$ and $\tau \simeq \frac{\maxS}{\maxSi}$. Then $C = \maxS^2 (\maxS + 1) \left(\frac{k}{\maxSi \card \SeenInts} + \frac{1}{\maxSi(\maxS+1)}\right)$ which can be approximated by $ C \simeq \tau^2 \maxS (k + \frac{1}{\tau})$.
\end{proof}

\paragraph{}
Testing for the compatibility of a real $t$ is linear in the size of $\SeenInts$, so the whole procedure is at most quadratic. The full set of compatible values can be obtained in cubic time.

%

\section{Results and Discussion}
\label{simulations}


\subsection{Compatible Values Estimator}


\paragraph{}
Figure \ref{fig:cve1} illustrates the compatible values estimator on a simulated dataset. The dataset $\{6,6,11,5,3,5,2,6,5,13,2,7,7,7,6\}$ is obtained from the observation model $X = \lfloor 1.32 * N \rfloor$ where $N$ is distributed according to a Poisson random variable with mean $5.5$. It is first reduced to the lattice $\SeenInts = \{2,3,5,6,7,11,13\}$ and is shown at the bottom.

\paragraph{}
The vertical axis represents values of $\tau$. The set of compatible values is composed of several intervals and represented on the left. For each interval, we select one compatible value $t$ and represent the lattice $\Lattice{t}$. All reals in the same interval generate the same lattice, up to $\max(\SeenInts)$.

\begin{figure}
 \centering
 \includegraphics[width=0.45\textwidth]{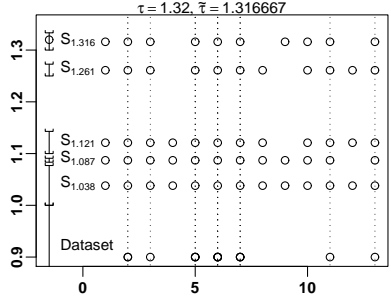}
 \caption{Comparison of a few compatible lattices and the dataset.}
 \label{fig:cve1}
\end{figure}

\paragraph{}
For comparison, Figure \ref{fig:cve2} displays $\Lattice{t}$ for several values that are not compatible with the data. For example, $5$ and $11$ are in the dataset but not in $\Lattice{1.2}$.

\begin{figure}
 \centering
 \includegraphics[width=0.45\textwidth]{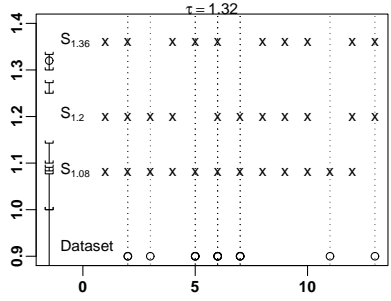}
 \caption{Comparison of the dataset and a few lattices that are not compatible.}
 \label{fig:cve2}
\end{figure}

\paragraph{}
Two sources of variation affect the estimate $\tilde{\tau}$. First, the estimator is not perfect because the dataset is finite. Second, the data set $\SeenInts$ is random. Figure \ref{fig:cve3} shows the performance of the estimator with a fixed dataset ($N \in \interval{1,10}$) for several values of $\tau$. The intervals shown correspond to the intervals in $\Compatible{\SeenInts}$ that contain the largest compatible value.
\begin{figure}
 \centering
 \includegraphics[width=0.45\textwidth]{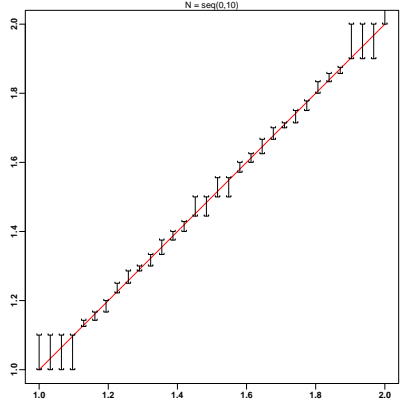}
 \caption{Length of the maximal interval for several values of $\tau$ with $N \in \interval{1,10}$.}
 \label{fig:cve3}
\end{figure}

\paragraph{}
We can see that the precision of the estimator varies with $\tau$. Only the range $[1,2]$ is shown because the precision only depends on the rest $\tau - \lfloor \tau \rfloor$ modulo 1. Consequently, the absolute precision is roughly constant, whereas the relative precision is $\displaystyle O\left(\frac{1}{\tau}\right)$. With small quantization error ($\tau \gg 1$) the estimation problem is easier.


\paragraph{}
Figure \ref{fig:cve4} shows the distribution of $\tilde{\tau}$ when the dataset $\SeenInts$ is the result of 15 samples of $X = \floor{\tau N}$ where $\tau = 1.32$ and $N$ is distributed according to a Poisson random variable with mean $5.5$. The distribution of the estimator value is obtained from the 200 repeats shown in the bottom of the plot thanks to a kernel estimate, even if the distribution is a sum of Dirac point masses. The interval shows the interval obtained with the (complete) dataset $\{ \floor{1.32*n}, n \in \interval{1,13} \}$.
\begin{figure}
 \centering
 \includegraphics[width=0.45\textwidth]{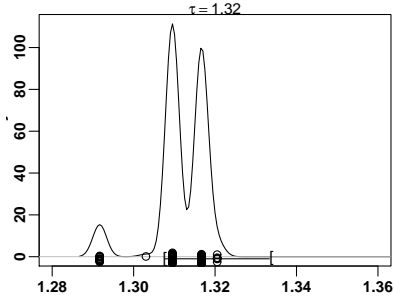}
 \caption{Kernel density estimate of the distribution of the compatible values estimator on a random dataset.}
 \label{fig:cve4}
\end{figure}

\subsection{The Double Poisson Family}
\label{doublepoisson}

\paragraph{}
In this section, we briefly recall the results from \cite{efron1986def}, and deduce an estimator for our model.
Let $g_\mu(y) = e^{-\mu} \mu^y / y!$ denote the distribution function of a Poisson random variable with mean $\mu$. The double Poisson distribution with parameters $\theta,\mu$ is defined as:
\begin{align*}
  f_{\theta,\mu} (y) & = c(\theta,\mu) \theta ^ {1/2} \left\{ g_\mu(y) \right\}^\theta \left\{ g_y(y) \right\}^{1-\theta} \\
                     & = c(\theta,\mu) \left(\theta ^ {1/2} e^{-\theta \mu}\right) \left( \frac{e^{-y} y^y}{y!} \right) \left( \frac{e \mu}{y} \right)^{\theta y}
\end{align*}
where $c$ is a normalization constant.

\paragraph{}
Maximum Likelihood Estimation leads to the following estimators. Let $(Y_1, \ldots, Y_n)$ be independent identically distributed random variables with distribution $f_{\theta,\mu}$, then 
\begin{align*}
 \hat{\mu}    & = \frac{1}{n} \sum_{i=1}^n Y_i \\
 \hat{\theta} & = \frac{n}{2 \sum_{i=1}^n I(Y_j,\mu)}
\end{align*}
where $I(\mu_1,\mu_2) = \mu_1 (\log(\mu_1) - \log(\mu_2)) - (\mu_1 - \mu_2)$.

\paragraph{}
Let $Y_{\theta,\mu}$ be a random variable with distribution function $f_{\theta,\mu}$, then according to \cite{efron1986def} $Y_{\theta,\mu}$ has approximately the same distribution as $X / \theta$ where $X$ is Poisson distributed with mean $\mu \theta$. With Poisson distribution for the ion counts, our observation model becomes $Y = \floor{\tau N}$ where $N$ is Poisson distributed with mean $\lambda$. Consequently, estimates for $\tau$ and $\lambda$ can be deduced from $\hat{\theta}$ and $\hat{\mu}$ with the following relations:
\begin{align*}
 \hat{\tau}    & = \frac{1}{\hat{\theta}} \\
 \hat{\lambda} & = \hat{\mu} \hat{\theta}
\end{align*}

\paragraph{}
The double Poisson distribution is a correct approximation of the distribution of $X / \theta$ for large $\mu$, and in that case, $\hat{\tau}$ and $\hat{\lambda}$ are unbiased estimates of $\tau$ and $\lambda$. The standard deviation of $\hat{\tau}$ is $\frac{\tau \sqrt{2}}{\sqrt{n}}$. Figure \ref{fig:dpois1} shows the distribution of $\hat{\tau}$ with flooring noise, i.e.\ in the model $Y = \floor{\tau N}$ (solid line) and without flooring noise in the model $Y = \tau N$ (dotted line). The plot was generated with 2000 repeats with data sets of size 500. We observe a large standard deviation compared with the compatible values estimator, even on a much larger data set. With large values of $\mu$, truncation has limited effect on the estimate.

\begin{figure}
 \centering
 \includegraphics[width=0.45\textwidth]{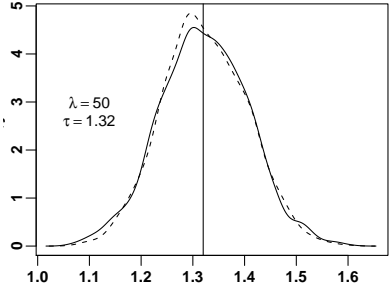}
 \caption{Kernel density estimate of the distribution of the double poisson estimate of $\tau$ with flooring noise (solid line) and without (dotted line).}
 \label{fig:dpois1}
\end{figure}

\paragraph{}
For modeling rare ion count events, we need to study the estimators with small values of $\lambda$ and $\tau$. In that case, $\hat{\tau}$ is strongly biased for both models as shown on Figure \ref{fig:dpois2}. This implies that the approximation is not suited to this range of parameters, and that the flooring noise makes a significant difference there. Figure \ref{fig:dpois2} was generated using 2000 repeats with data sets of size 500. For comparison, we show the optimal interval obtained by the compatible values estimator on the data set $\{ \floor{1.32*n}, n \in \interval{1,13} \}$.

\begin{figure}
 \centering
 \includegraphics[width=0.45\textwidth]{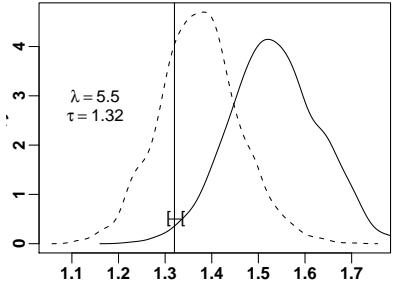}
 \caption{Kernel density estimate of the distribution of the double poisson estimate of $\tau$ with flooring noise (solid line) and without (dotted line).}
 \label{fig:dpois2}
\end{figure}

%
%

\subsection{Fourier Estimator}

\paragraph{}
From the set $\tau \bbN$ we can construct the signal $f : t \mapsto \sum_{k \in \bbN} \delta(x-\tau k)$ where $\delta$ denotes the Dirac function, that is to say a periodic series of pulses. The period $\tau$ may thus be estimated using Fourier transform. Likewise, we define the estimator $1 / \tau_F$ as the maximum of the Fourier transform of the quasi-periodic signal $f : t \mapsto \sum_{k \in \bbN} \delta(x-\floor{\tau k})$. 

\paragraph{}
As $\tau$ can be seen as a quasi-period, our estimation problem is closely linked to the ``harmonic retrieval problem''. Many approaches have been proposed in that domain and the main focus is on the estimation of the Power Spectral Density \cite{hayes1996sds}. However, the signal is usually perturbed by additive noise whereas in this paper we consider a distortion of the time axis.

\paragraph{}
We use the following algorithm:
\begin{itemize}
 \item sample the signal $f$ at the points $x_i = i$ for the integers $i$ in $\interval{0,\max(\SeenInts)}$
 \item compute the Discrete Fourier Transform
 \item compute an upper bound using Proposition \ref{proposition_UpperBound} : $\displaystyle \tau < B = \frac{\maxS + 1}{n}$
 \item find the frequency with highest absolute Fourier coefficient
 \item return the corresponding period (inverse of the frequency)
\end{itemize}

\paragraph{}
This estimator has a precision that corresponds to the sampling rate in time space around the true value. In the Fourier space, the sampling rate is uniform with steps of length $1 / \max(\SeenInts)$ which is equivalent to $1 / (\tau \times \maxS)$. In the time space, as $P = 1/f$, then $\Delta P = - \Delta f / f^2$ and the sampling rate is non uniform. For $f = 1/\tau$ we obtain the precision of the Fourier estimator as $\tau / \maxS$. This suggests that the precision decreases with $\tau$. However, the signal frequency $1/\tau$ is near $\maxS / \max(\SeenInts)$ which is one of the sampling points. As a result, in practice, the absolute precision is on the order of $1/\maxS$ and independent of $\tau$.

\paragraph{}
Figure \ref{fig:fourier1} shows in the frequency and period space the Fourier transform of the quasi-periodic signal obtained from the dataset $N \in \interval{1,10}$. The vertical line corresponds to the upper bound from Proposition \ref{proposition_UpperBound}.

\begin{figure}
 \centering
 \includegraphics[width=0.45\textwidth]{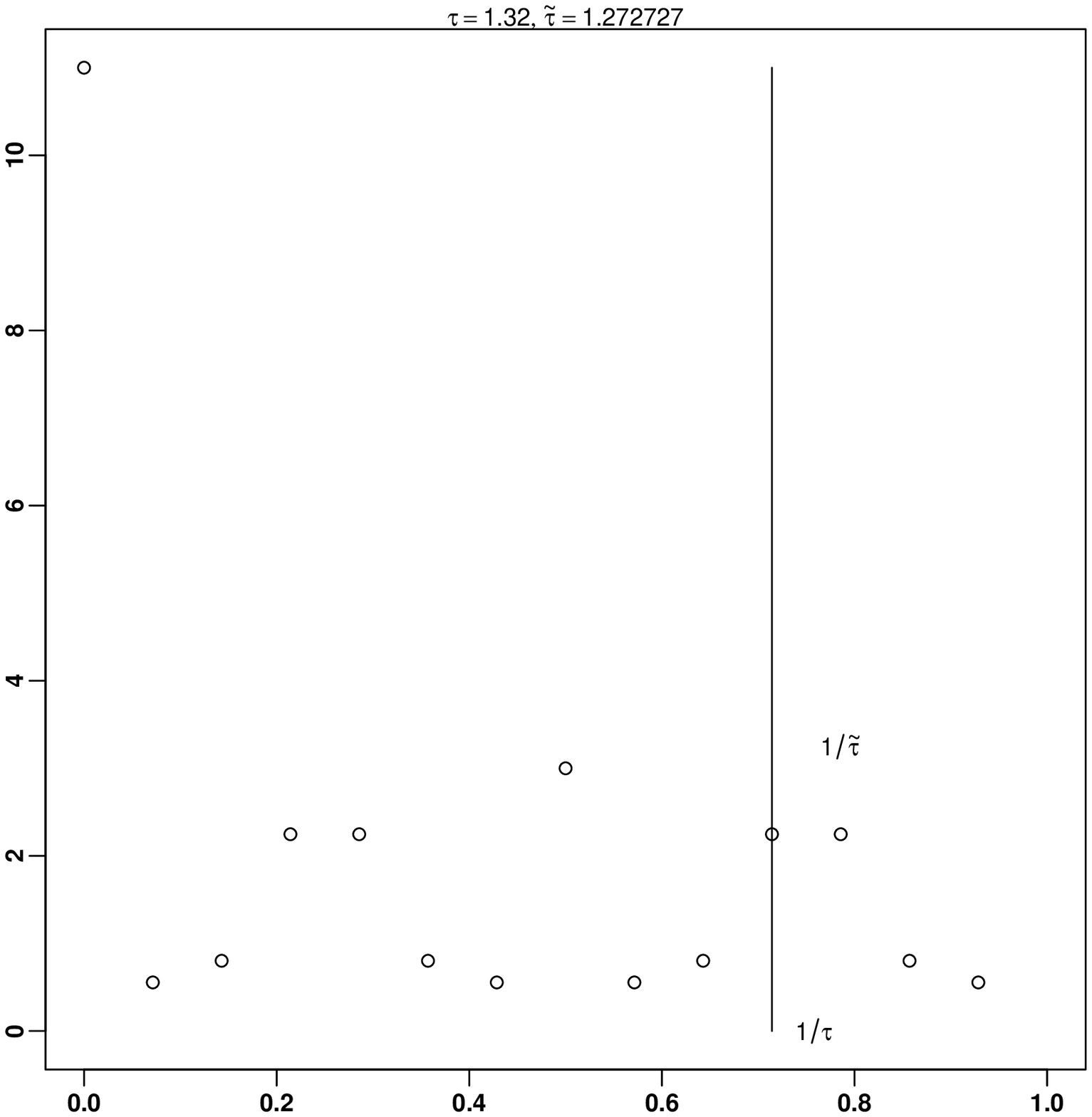}
 \includegraphics[width=0.45\textwidth]{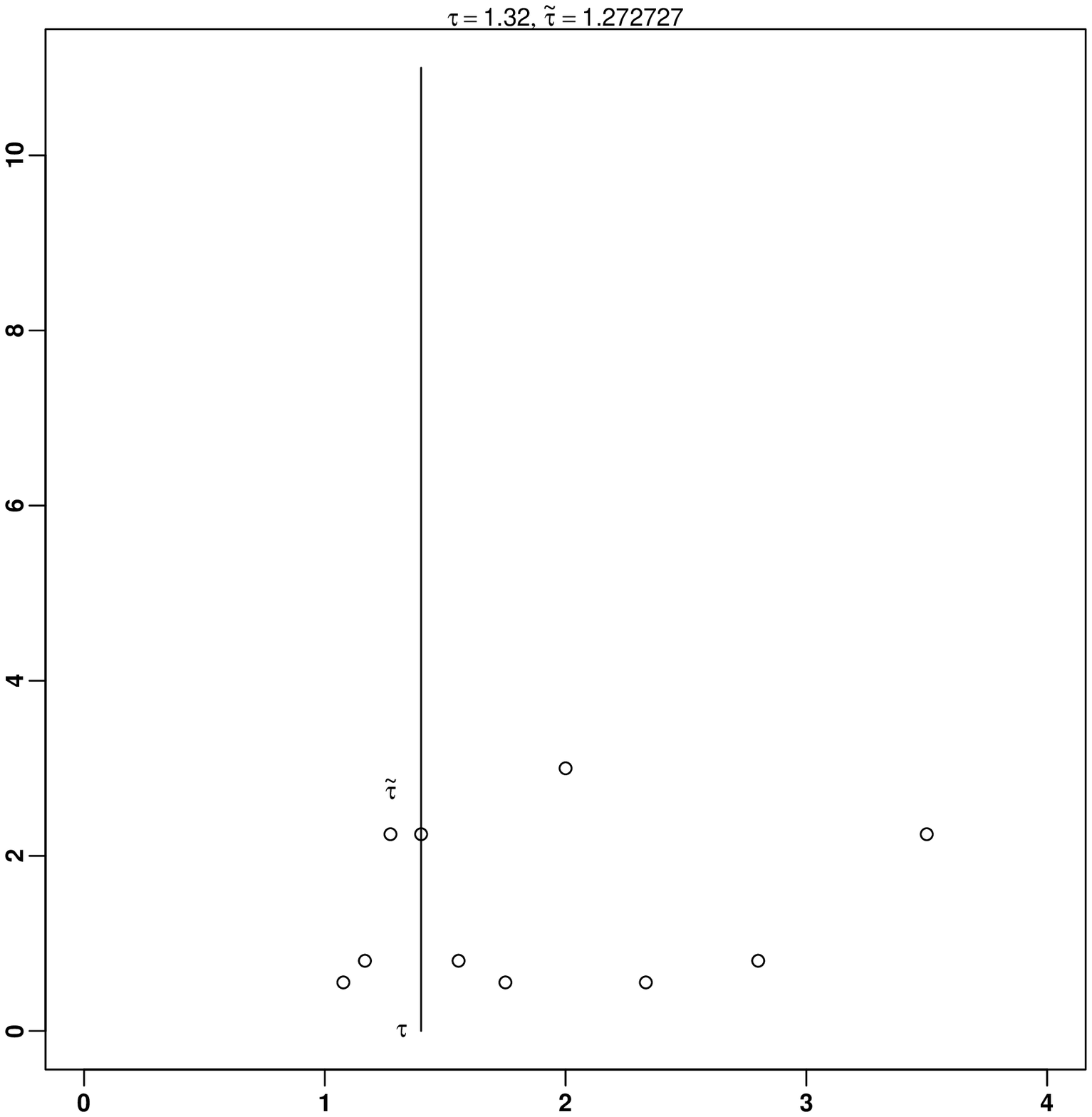}
 \caption{Fourier transform of the quasi-periodic signal, in Fourier space (left) and period space (right). The vertical line shows the upper bound from Proposition \ref{proposition_UpperBound}.}
 \label{fig:fourier1}
\end{figure}

\textbf{Remark}
When oversampling by a factor $k$, i.e.\ sampling at the points $\displaystyle x_i = \frac{i}{k}$ for the integers $i$ in $\interval{0,\max(\SeenInts) \times k}$, the harmonics of 1 Hz increase in magnitude. Therefore it is necessary to weed out the frequencies above 1 Hz in the distinguishible case. Moreover, oversampling increases the maximum frequency that can be represented in the Fourier space and does not improve the precision of the estimator.

\paragraph{}
On a random dataset, the Fourier estimator suffers greatly from missing values. Figure \ref{fig:fourier2} shows the distribution of $\tau_F$ with 200 simulations and a dataset of size 15 where $N$ is distributed according to a Poisson random variable with mean $5.5$. The precision of the estimator is much worse than the compatible values estimator (see the plotted interval). The Fourier estimate $\tau_F$ is compatible with the dataset in only about 1\% of the simulations.
\begin{figure}
 \centering
 \includegraphics[width=0.45\textwidth]{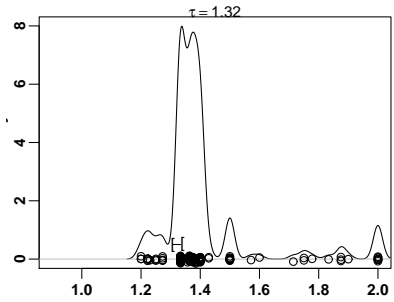} 
 \caption{Kernel density estimate of the distribution of the Fourier estimator.}
 \label{fig:fourier2}
\end{figure}


\subsection{Linear Regression Estimator}

\paragraph{}
The observation model $X = \floor{\tau N}$ may be written $X = \tau N + \varepsilon$ where $\varepsilon$ is an error term. Even if $\varepsilon$ is not Gaussian, linear regression can yield a reasonable estimate of the regression coefficient $\tau$ as Figure \ref{fig:linreg1} shows.

\paragraph{}
We use the following algorithm:
\begin{itemize}
 \item compute the empirical lattice $\{ x_i \}$ by sorting and removing duplicates in the dataset
 \item compute the indexes $ \{ n_i \}$ according to the sorting index
 \item fit a regression line of the form $x_i = a n_i + 0.5$
 \item return $a$
\end{itemize}

\paragraph{}
Figure \ref{fig:linreg1} shows the linear regression estimator on the dataset $\SeenInts = \floor{\tau \interval{1,10}} $ (no missing values). For each element in the dataset, if the regression line intersects the length 1 interval then the estimate is compatible with the data point.
\begin{figure}
 \centering
 \includegraphics[width=0.45\textwidth]{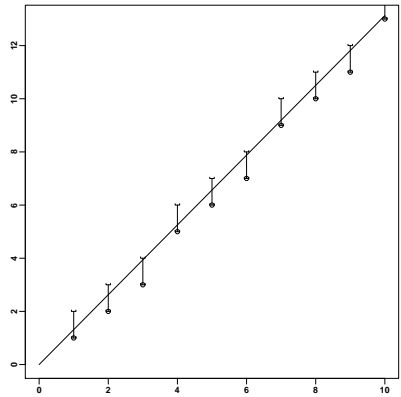}
 \caption{The linear regression estimator on a dataset without missing values.}
 \label{fig:linreg1}
\end{figure}

\paragraph{}
Note that the truncation error is not centered. Consequently, we compute the regression coefficient in the the model $X + 0.5 = \tau N + \varepsilon$. For the same reason, the regressors are below the regression line.

\paragraph{}
The main difficulty in the linear regression is that the values of the regressor variable $N$ are unknown. In the distinguishible case, it is possible to reconstruct them when there are no missing values, i.e.\ $\Lattice{\tau} \cap \interval{0,n} = \SeenInts$ where $n = \max{\SeenInts}$. Otherwise, the regressors will be shifted and that affects strongly the estimate. Figure \ref{fig:linreg2} shows such a case. The regressors inferred in the linear regression estimator and the regression line are shown in solid line. For comparison, the true regressors are displayed in dotted line. The compatible values estimator finds the true regressors and its regression line is shown in dotted line.

\begin{figure}
 \centering
 \includegraphics[width=0.45\textwidth]{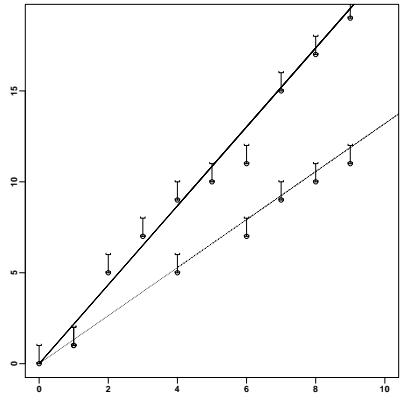}
 \caption{The linear regression estimator in the case of missing values. The compatible values estimator is shown in dotted line.}
 \label{fig:linreg2}
\end{figure}

\section{Conclusion}

\paragraph{}
In the observation model $X = \floor{\tau N}$, the parameter $\tau$ can be reliably estimated independently from $N$. This allows the full recovery of the statistics of $N$ prior to modeling. The structure of $N$ may then be studied at length afterwards.

\paragraph{}
The estimator based on compatible values is optimal and reasonably quick. It is resistant to missing values in practice, and in the worst case returns an acceptable (parcimonious) answer without hypotheses on the law of $N$. 

\paragraph{}
Unfortunately, this estimator only takes into account truncation noise, and yields poor results on real data. We are currently pursuing an extension of the model that mixes electronic noise and truncation effects.

\paragraph{}
Compared to the other three estimators, the compatible values estimator performs much better but also more slowly. The Double Poisson Family is simply not a suitable model in our range of parameters, but there is room for improvement for the other estimators. For example, the main difficulty in the linear regression is computing the indexes. With some knowledge about the law of $N$, quantile regression could be applied. 

\paragraph{}
The Fourier estimator suggests a strong relationship with the harmonic retrieval problem, although the signal is not periodic. Although the truncation error considered in this paper is very different from Gaussian errors usually considered in harmonic retrieval, some algorithms from that field may make a better compromise between speed and precision for the current problem.

\bibliographystyle{plain} 
\bibliography{EstimationGainParameter}

\section*{Appendix}

\begin{proposition}
\label{injective}
The mapping $x \mapsto \floor{ \tau \, x }$ is injective if and only if $\tau \geq 1$.
\end{proposition}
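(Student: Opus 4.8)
The plan is to prove the two implications separately; both are elementary properties of the floor map on $\bbN$, and I expect no serious obstacle beyond being careful about which integers are admitted as inputs.

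For the direction $\tau \geq 1 \Rightarrow$ injective, I would prove the stronger statement that $x \mapsto \floor{\tau x}$ is then \emph{strictly increasing} on $\bbN$. Given integers $x < y$ we have $y - x \geq 1$, hence $\tau y = \tau x + \tau(y-x) \geq \tau x + 1$; applying the floor and the identity $\floor{a + m} = \floor{a} + m$ for $m \in \bbZ$ gives $\floor{\tau y} \geq \floor{\tau x + 1} = \floor{\tau x} + 1 > \floor{\tau x}$. A strictly increasing map is injective, which settles this direction (and incidentally confirms that $y \mapsto \lceil y/\tau \rceil$ inverts it on the image, as stated in the text).

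For the converse I would argue by contraposition: assuming $\tau < 1$, I exhibit two distinct integers with the same image. Set $a_k = \floor{\tau k}$ for $k \in \bbN$. Since $\tau(k+1) - \tau k = \tau \in (0,1)$, the floor being monotone and $\tau < 1$ give $a_{k+1} - a_k \in \{0,1\}$, so $(a_k)$ is non-decreasing with unit steps. If the map were injective this sequence would be strictly increasing, forcing every step to equal $1$, hence $a_k \geq a_1 + (k-1)$; but $a_1 = \floor{\tau} = 0$, so $a_k \geq k-1$, while by definition $a_k \leq \tau k < k$. Together these force $k - 1 \leq \tau k$, i.e.\ $k(1-\tau) \leq 1$, which fails once $k > 1/(1-\tau)$. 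The contradiction shows some step is $0$, i.e.\ $\floor{\tau(k+1)} = \floor{\tau k}$, so the map is not injective. (When $0$ is allowed as an input — as it is throughout the paper, since $X = \floor{\tau N} = 0$ is a legitimate observation — one can shortcut this entirely: $\floor{\tau \cdot 0} = 0 = \floor{\tau \cdot 1}$ already when $0 < \tau < 1$.)

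The only genuine point of care is the status of the input $0$ versus restricting to $\bbN^*$, handled above, together with the routine floor identities; there is no deeper difficulty, so the argument is short in either reading.
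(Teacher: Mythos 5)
Your proof is correct and takes essentially the same route as the paper: strict monotonicity of $x \mapsto \floor{\tau x}$ when $\tau \geq 1$ (you merge the cases $\tau = 1$ and $\tau > 1$ that the paper treats separately), and for the converse the collision $\floor{\tau \cdot 0} = \floor{\tau \cdot 1} = 0$, which is exactly the paper's argument. Your additional pigeonhole-style argument for the case where $0$ is excluded from the domain is sound but not needed here, since the paper admits $N = 0$.
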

\begin{proof}
If $\tau = 1$, the mapping is the identity function. Suppose $\tau > 1$ and let $n_1$ and $n_2$ denote two (positive) integers such that $n_1 < n_2$. Then $\tau n_2 - \tau n_1 > \tau > 1$ and $\floor{\tau n_2} > \floor{\tau n_1}$. When $\tau < 1$, the mapping is not injective because $\floor{\tau \times 1} = \floor{\tau \times 0} =  0$. 
\end{proof}

\subsection*{Upper Bounds on $\tau$}

\begin{proposition}[Any two observations]
\label{any_two_obs}
Let $x$ and $y$ be two distinct elements of the set $\SeenInts$ of observed values. Then $\tau < 1 + |x - y|$.
\end{proposition}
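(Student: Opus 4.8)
The plan is to use that every element of $\SeenInts$ lies in the theoretical lattice $\Lattice{\tau}$: for the two given values there are nonnegative integers $m$ and $n$ with $x = \floor{\tau m}$ and $y = \floor{\tau n}$. Without loss of generality I assume $x < y$. The first step is to pin down the order of the indices. If $n < m$ then $\floor{\tau n} \le \floor{\tau m}$, i.e.\ $y \le x$, a contradiction; and if $n = m$ then $x = y$, again a contradiction. Hence $n > m$, so $n - m \ge 1$.

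The heart of the argument is then a single elementary estimate. Using $\floor{\tau n} > \tau n - 1$ (true for every real, which is where the strict inequality will come from) and $\floor{\tau m} \le \tau m$, I obtain
\[
 |x - y| = y - x = \floor{\tau n} - \floor{\tau m} > \tau n - 1 - \tau m = \tau(n-m) - 1 \ge \tau - 1,
\]
where the final inequality uses $n - m \ge 1$ together with $\tau > 0$. Rearranging gives precisely $\tau < 1 + |x - y|$, which is the claim.

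There is essentially no obstacle here; the proof is a two-line computation. The only points worth a word of care are the direction of the index inequality $m < n$ (which follows from monotonicity of the floor and, in the distinguishible case, is even a consequence of injectivity, cf.\ Proposition \ref{injective}) and the fact that the conclusion is strict, which is forced by the strict bound $\floor{z} > z-1$. It is also worth noting that the estimate uses only $\tau > 0$ and $x \ne y$, so the hypothesis $\tau > 1$ plays no role in this particular bound — it merely guarantees we are in the regime where $\SeenInts$ and the inverse map behave as described elsewhere in the paper.
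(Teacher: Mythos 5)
Your proof is correct and follows essentially the same route as the paper's: both bound $x \leq \tau m < x+1$ and $y \leq \tau n < y+1$, combine to get $\tau(n-m) < y-x+1$, and use $n - m \geq 1$. The only difference is cosmetic — you explicitly justify $n > m$ via monotonicity of the floor, which the paper leaves implicit.
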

\begin{proof}
Let $i$ and $j$ be the values of $N$ corresponding to $x$ and $y$ i.e.\ $x = \lfloor \tau i \rfloor$ and $y = \lfloor \tau j \rfloor$. Then we have the inequalities:
$x \leq \tau i < x+1$, $y \leq \tau j < y+1$, and thus $\tau (j-i) < y - x + 1$. Assuming $x < y$, we obtain $ \tau < \frac{y-x+1}{j-i} < y-x+1$.
\end{proof}

\begin{proposition}[Observed intervals]
Let $\interval{x,y}$ denote the set of integers between $x$ and $y$. If $\interval{x,y}$ is a subset of $\SeenInts$, then $\displaystyle \tau < 1 + \frac{1}{y-x}$.
\end{proposition}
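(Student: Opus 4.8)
The plan is to run the index-counting argument behind Proposition~\ref{any_two_obs}, but applied simultaneously to the two endpoints of the block $\interval{x,y}$ and exploiting that \emph{every} integer between them is observed as well. Assume $x<y$, the case $x=y$ being vacuous. Since $\interval{x,y}\subset\SeenInts$, each $z\in\interval{x,y}$ is of the form $z=\floor{\tau n_z}$ for some integer $n_z\ge0$; I would fix one such $n_z$ for every $z$ (in the distinguishible case $\tau\ge1$ it is in fact unique, by Proposition~\ref{injective}, but uniqueness will not be needed).

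Next I would control the spread $n_y-n_x$ of the extreme indices. Because $k\mapsto\floor{\tau k}$ is non-decreasing, $n_{z+1}\le n_z$ would force $z+1=\floor{\tau n_{z+1}}\le\floor{\tau n_z}=z$, which is absurd; hence $n_{z+1}\ge n_z+1$ for all $z$. Telescoping this over $z=x,x+1,\dots,y-1$ yields $n_y-n_x\ge y-x$. (Equivalently, the $n_z$ are $y-x+1$ distinct integers whose minimum is $n_x$ and maximum is $n_y$.)

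Finally I would combine the two floor inequalities. From $x=\floor{\tau n_x}$ we have $\tau n_x\ge x$, and from $y=\floor{\tau n_y}$ we have $\tau n_y<y+1$; subtracting gives $\tau(n_y-n_x)<y-x+1$. Since $\tau>0$ and $n_y-n_x\ge y-x>0$, this forces $\tau(y-x)\le\tau(n_y-n_x)<y-x+1$, i.e.\ $\tau<1+\frac{1}{y-x}$, which is the claim.

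There is no serious obstacle here: the only point that needs a moment's care is the monotonicity/telescoping step that turns ``every integer in the block is observed'' into the index gap $n_y-n_x\ge y-x$, and everything else is a single manipulation of the floor bracket. It is worth remarking that this proposition strengthens Proposition~\ref{any_two_obs} (which uses only the two endpoints and gives $\tau<1+|x-y|$): supplying the intermediate integers is precisely what upgrades the bound from $|x-y|$ to $\tfrac{1}{y-x}$.
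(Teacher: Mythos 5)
Your proof is correct and follows essentially the same route as the paper: bound $\tau(n_y-n_x)<y-x+1$ from the floor inequalities at the two endpoints, then use the fact that the $y-x+1$ observed integers in $\interval{x,y}$ force an index gap $n_y-n_x\ge y-x$. The only difference is cosmetic — you justify that index gap explicitly by the monotonicity/telescoping argument, whereas the paper asserts the count $y-x+1=j-i+1$ in one line by appealing to distinguishibility.
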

\begin{proof}
Using the same notations as in the proof of Proposition \ref{any_two_obs}, $\tau < \frac{y-x+1}{j-i} < \frac{y-x}{j-i} + \frac{1}{j-i} < 1 + \frac{1}{j-i}$ because in the distinguishible case the number of elements in $\interval{x,y}$ is $y-x+1 = j-i+1$.
\end{proof}

\begin{proposition}[Density Upper Bound]
\label{proposition_UpperBound}
Let $\maxS = \lfloor \tau \maxSi \rfloor$ denote the largest integer in $\SeenInts$. Then $\displaystyle \tau < \frac{\maxS + 1}{\maxSi}$. When $\maxSi$ is unknown (because of potential missing values), let $n$ denote the number of non zero observed integers. Then $\displaystyle \tau < \frac{\maxS + 1}{\maxSi} \leq \frac{\maxS + 1}{n}$.
\end{proposition}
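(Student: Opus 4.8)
The plan is to treat the two inequalities separately, since only the second one uses distinguishibility. For the first, I would start directly from the defining relation $\maxS = \floor{\tau \maxSi}$ and the elementary bracketing property of the floor function: $\maxS \le \tau\,\maxSi < \maxS + 1$. Dividing the right-hand inequality by the positive integer $\maxSi$ immediately gives $\tau < \frac{\maxS+1}{\maxSi}$. Note that this step needs nothing beyond $\maxSi > 0$; it holds whether or not there are missing values.

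For the second inequality $\frac{\maxS+1}{\maxSi} \le \frac{\maxS+1}{n}$, the numerator is the same positive quantity on both sides, so it suffices to prove $n \le \maxSi$: the number of non-zero observed integers never exceeds the index $\maxSi$ of the largest one. Here I would invoke the distinguishible case via Proposition \ref{injective}: for $\tau \ge 1$ the map $k \mapsto \floor{\tau k}$ is (strictly) increasing on $\bbN$, hence injective. Every non-zero element of $\SeenInts$ is of the form $\floor{\tau k}$ for some index $k \ge 1$, and since $\floor{\tau k} \le \maxS = \floor{\tau \maxSi}$, monotonicity forces $k \le \maxSi$. Thus the non-zero elements of $\SeenInts$ are labelled injectively by a subset of $\interval{1,\maxSi}$, whence $n = \card\{\,x \in \SeenInts : x \neq 0\,\} \le \maxSi$.

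Combining the two facts gives $\tau < \frac{\maxS+1}{\maxSi} \le \frac{\maxS+1}{n}$, which is exactly the stated chain of inequalities.

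I do not expect a genuine obstacle here. The only point requiring a little attention is the argument $n \le \maxSi$: one must rule out an observed integer whose index exceeds $\maxSi$, and this is precisely where strict monotonicity of $k \mapsto \floor{\tau k}$ — equivalently $\tau \ge 1$, i.e.\ the distinguishible hypothesis — is used. Everything else reduces to a one-line manipulation of the floor bracketing.
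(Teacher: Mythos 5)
Your proof is correct. The paper in fact states this proposition without supplying a proof, and your argument — the floor bracketing $\maxS \le \tau\,\maxSi < \maxS+1$ for the first inequality, and the injectivity/monotonicity of $k \mapsto \floor{\tau k}$ in the distinguishible case to get $n \le \maxSi$ for the second — is exactly the reasoning the paper implicitly relies on in Section~\ref{bornes_obs}; you have simply made the one non-trivial step ($n \le \maxSi$) explicit.
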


\end{document}